\newif\ifcomments%
\newcommand{\C}{\mathcal{C}}
\newcommand{\G}{\mathcal{G}}
\newcommand{\GC}{\G_\C}
\newcommand{\R}{\mathbb{R}}
\newcommand{\Rp}{\mathbb{R}_{+}}
\newcommand{\N}{\mathbb{N}}
\newcommand{\francois}[1]{\ifcomments\textcolor{blue}{#1}\else #1\fi}
\newcommand{\sylvain}[1]{\ifcomments\textcolor{magenta}{#1}\else #1\fi}
\newcommand{\elisabeth}[1]{\ifcomments\textcolor{cyan}{#1}\else #1\fi}
\DeclareMathOperator*{\argmin}{argmin}
\begin{document}

\title{Graphical Conditions for Rate Independence\\ in Chemical Reaction Networks}

\author{\'{E}lisabeth Degrand \and Fran\c{c}ois Fages \and Sylvain Soliman}

\institute{Inria Saclay-\^Ile de France, Palaiseau, France}

\maketitle
\thispagestyle{plain}

\begin{abstract}
  Chemical Reaction Networks (CRNs) provide a useful abstraction of molecular interaction networks
  in which molecular structures as well as mass conservation principles are abstracted away
  to focus on the main dynamical properties of the network structure.
  In their interpretation by ordinary differential equations,
  we say that a CRN with distinguished input and output species computes a positive real function $f:\Rp\rightarrow\Rp$,
  if for any initial concentration $x$ of the input species,
  the concentration of the output molecular species stabilizes at concentration $f(x)$.
  The Turing-completeness of that notion of chemical analog computation has been established 
  by proving that any computable real function 
  can be computed by a CRN over a finite set of molecular species.
  Rate-independent CRNs form a restricted class of CRNs of high practical value since they enjoy a form of absolute robustness in the sense that
  the result is completely independent of the reaction rates  and depends solely on the input concentrations.
  The functions computed by rate-independent CRNs have been characterized mathematically as the set of piecewise linear functions from input species.
  However, this does not provide a mean to decide whether a given CRN is rate-independent.
  In this paper, we provide graphical conditions on the Petri Net structure of a CRN which entail the rate-independence property
  either for all species or for some output species.
  We show that in the curated part of the Biomodels repository, among the 590 reaction models tested,
  2 reaction graphs were found to satisfy our rate-independence conditions for all species, 94 for some output species, among which 29 for some non-trivial output species.
  Our graphical conditions are based on a non-standard use of the Petri net notions of place-invariants and siphons
  which are computed by constraint programming techniques for efficiency reasons.
\end{abstract}

\section{Introduction}

Chemical Reaction Networks (CRNs) are one fundamental formalism widely used in chemistry, biochemistry,
and more recently computational systems biology and synthetic biology.
CRNs provide an abstraction of molecular interaction networks
in which molecular structures as well as mass conservation principles are abstracted away.
They come with a hierarchy
of dynamic Boolean, discrete, stochastic and differential interpretations~\cite{FS08tcs}
which is at the basis of a rich theory for the analysis of their qualitative dynamical properties~\cite{Feinberg77crt,CF06siamjam,BFS18jtb},
  of their computational power~\cite{CSWB09ab,CDS12nc,FLBP17cmsb},
  and on their relevance as a design method for implementing high-level functions in synthetic biology, using either DNA~\cite{QSW11dna,CDSPCSS13nature}
  or DNA-free enzymatic reactions~\cite{CAFRM18msb,SABEBAFM20aml}.

  In their interpretation by ordinary differential equations,
  we say that a CRN with distinguished input and output species computes a positive real function $f:\Rp\rightarrow\Rp$,
  if for any initial concentration $x$ of the input species,
  the concentration of the output molecular species stabilizes at concentration $f(x)$.
  The Turing-completeness of that notion of chemical analog computation has been shown 
  by proving that any computable real function 
  can be computed by a CRN over a finite set of molecular species~\cite{FLBP17cmsb}.
  
  In the perspective of biochemical implementations with real enzymes however,
  the strong property of rate independence, i.e.~independence of the computed result of the rates of the reactions~\cite{SR11plos},
  is a desirable property that greatly eases their concrete realization, and guarantees a form of absolute robustness of the CRN\@.
  The set of input/output functions computed by a rate-independent CRNs has been characterized mathematically in~\cite{CDS14itcs,CKRS18cmsb} 
  as the set of piecewise linear functions.
  However, this does not give any mean to decide whether a given CRN is rate-independent or not.

In this paper, we provide purely graphical conditions on the CRN structure which entail the rate-independence property either for all molecular species or for some output species.
These conditions can be checked statically on the reaction hypergraph of the CRN, i.e.\ on its Petri net structure,
or can be used as structural constraints in rate-independent CRN design problems.

  \begin{example}\label{ex:min} 
    For instance, the reaction \texttt{a+b=>c} 
    computes at steady state the minimum of $a$ and $b$, i.e.~$c^*=\min(a(0),b(0))+c(0)$, $a^*=\max(0, a(0)-b(0))$, $b^*=\max(0, b(0)-a(0))$
    whatever the reaction rate is.
    Our graphical condition for rate independence on all species
    assumes that there is no synthesis reaction, no fork and no loop in the reaction hypergraph (Thm.~\ref{thm:ri} below).
    This is trivially the case in this CRN and suffices to prove rate-independence for all species in this example.
    \end{example}
 \begin{example}\label{ex:max} 
Similarly, the CRN
\begin{lstlisting}
  a => x+c
  b => y+c
  x+y => z
  c+z => r
\end{lstlisting}
assuming $x(0)=y(0)=c(0)=z(0)=r(0)=0$,
computes at steady state the maximum of $a$ and $b$: $c^*=\max(a(0), b(0))$ (as~$a(0)+b(0)-\min(a(0),b(0))$),
$x^*=\max(0,a(0)-b(0)),\ y^*=\max(0,b(0)-a(0))$,
$z^*=0,\ r^*=\min(a(0), b(0)), \ a^*=0,\ b^*=0$, independently of the reaction rates.
Fig.~\ref{fig:max} shows some trajectories obtained with different values for the mass action law kinetics constants $k_1, k_2, k_3, k_4$ of the four reactions above,
with initial concentrations $a(0)=3, b(0)=1$ and $0$ for the other species.
Here again, our graphical condition is trivially satisfied and demonstrates
the rate-independence property of that CRN for all species, by Thm.~\ref{thm:ri}.
\end{example}

\begin{figure}[htb]
\captionsetup[subfigure]{justification=centering}
 \centering
\begin{subfigure}{0.6\textwidth}
    \centering
    \includegraphics[width=.8\linewidth]{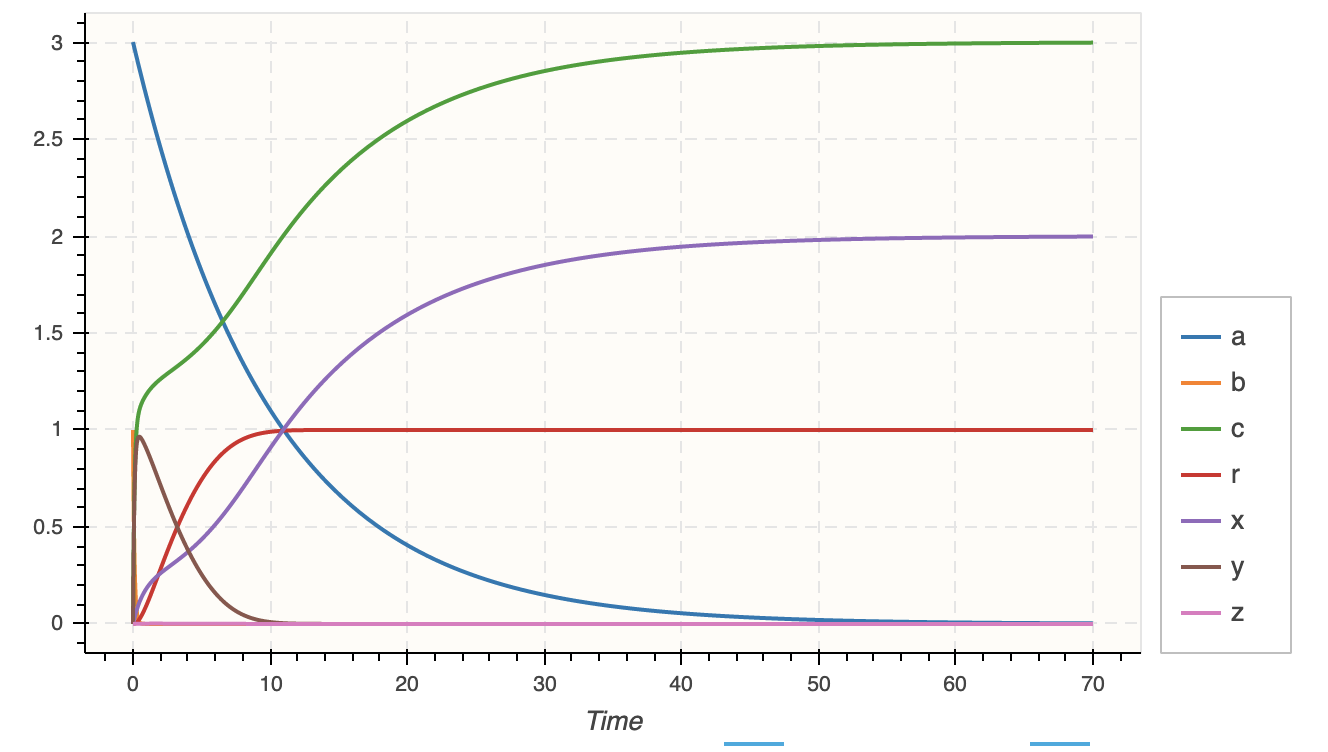}
    \caption{\small\texttt{k1 = 0.1, k2 = 10.0, k3 = 1, k4 = 100.0} }\label{fig:max_1}
\end{subfigure}
\begin{subfigure}{0.6\textwidth}
   \centering
   \includegraphics[width=.8\linewidth]{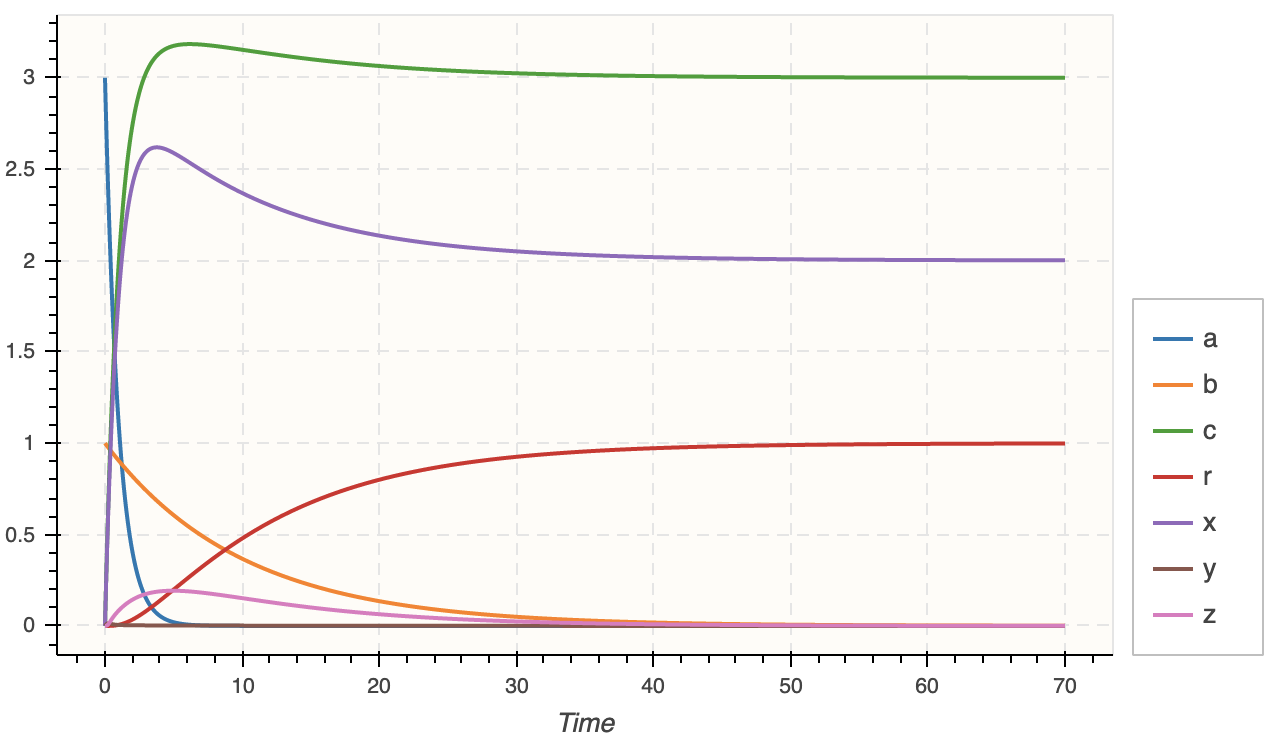}
   \caption{ \small\texttt{ k1 = 1, k2 = 0.1, k3 = 10, k4 = 0.1} }\label{fig:max_2}
 \end{subfigure}
\begin{subfigure}{0.6\textwidth}
   \centering
   \includegraphics[width=.8\linewidth]{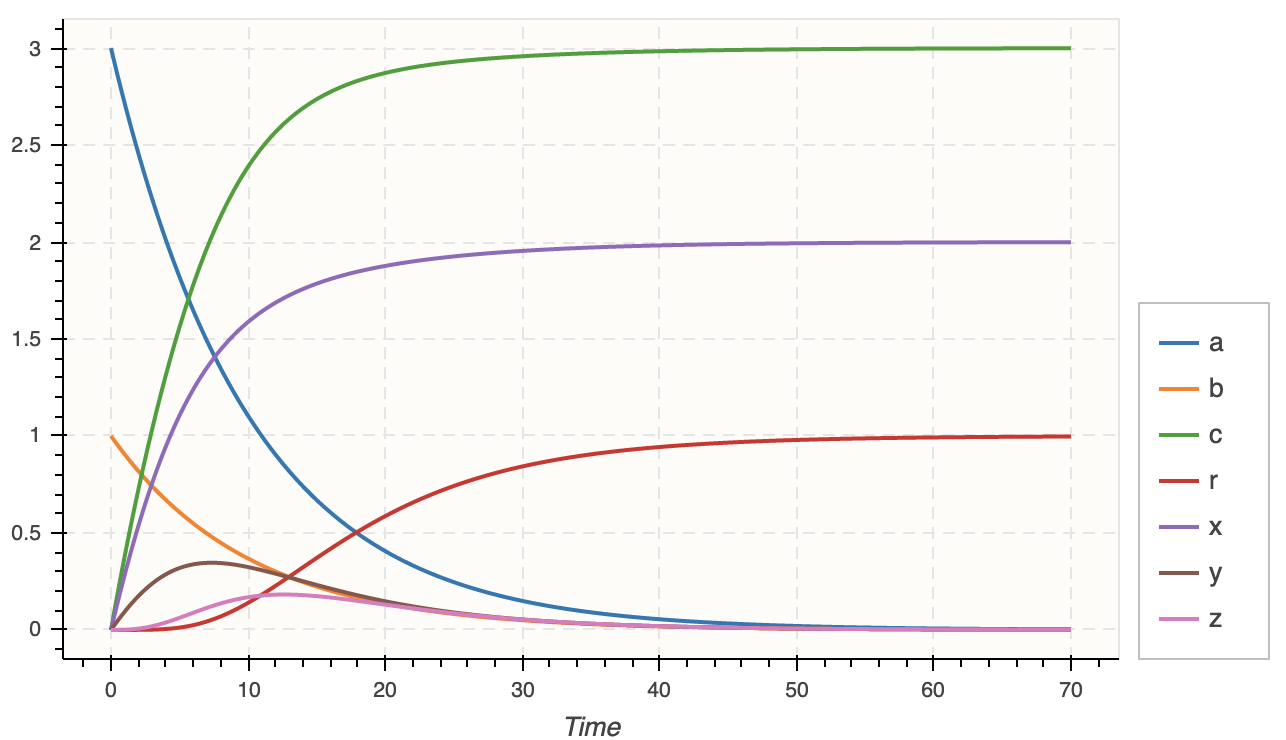}
   \caption{ \small\texttt{ k1 = 0.1, k2 = 0.1, k3 = 0.1, k4 = 0.1} }\label{fig:max_3}
 \end{subfigure}
\caption{Computation of $\max(a,b)$ with the rate-independent CRN of Ex.~\ref{ex:max} with mass action law kinetics
  with different reaction rate constants.}\label{fig:max}
\end{figure}

The rest of this paper is organized as follows.
In Sec.~\ref{sec:siphon}, we first give a sufficient condition for \sylvain{the} rate independence of output species of the CRN.
That condition tests the existence of particular P-invariants and siphons in the Petri net structure of the CRN\@.
\francois{This test is modelled as a constraint satisfaction problem, and implemented using constraint programming techniques}
in order to avoid the enumeration of all P-invariants and siphons that can be in exponential number.
Then in Sec.~\ref{sec:funnel}, we give another sufficient condition that entails the existence of a unique steady state,
and ensures that the computed functions for all species of a CRN are rate-independent.
None of these conditions are necessary conditions but we show with examples that they cover a large class of rate-independent CRNs.
In Sec.~\ref{eval}, we evaluate our conditions on the curated part of the repository of models BioModels~\cite{CLN13issb}
by taking as output species the species that are produced and not consumed.
 We show that 
  2 reaction graphs satisfy our rate-independence conditions for all species, 94 for some output species, among which 29 for some non-trivial output species.
  We conclude on the efficiency of our purely graphical conditions to test rate-independence of existing CRNs,
  and on the possibility to use those conditions as CRN design contraints for synthetic biology constructs such as~\cite{CAFRM18msb}.

\section{Preliminaries} 

\subsection{Notations}

Unless explicitly noted, we will denote sets and multisets by capital letters
(e.g. $S$, also using calligraphic letters for some sets), tuples of
values by vectors (e.g., $\vec x$), and elements of those sets or vectors
(e.g.~real numbers, functions) by small Roman or Greek letters.
\sylvain{For vectors that vary in time, the time will be denoted using a superscript notation like \({\vec x}^{t}\).}
For a multiset (or a set) $M:S\to\N$, $M(x)$ denotes the multiplicity of element
$x$ in $M$ (usually the stoichiometry in the following), and 0 if the element does not belong to the multiset.
By abuse of notation, $\geq$ will denote the integer or Boolean pointwise
order on vectors, multisets and sets (i.e.\ set inclusion), and $+,\ -$ the
corresponding operations for adding or removing elements.
With these unifying notations, set inclusion may thus be noted $S\le S'$ and
set difference $S-S'$. 

\subsection{CRN Syntax}

We recall here definitions from~\cite{FGS15tcs,FS08fmsb} for directed
chemical reactions networks.
In this paper, we assume a finite set $S=\{x_1,\dots,x_n\}$ of molecular species.

\begin{definition}\label{defn:reaction}
A \emph{reaction} over $S$ is a triple $(R, P, f)$, where

\begin{itemize}
   \item $R$ is a multiset of \emph{reactants} in $S$,

   \item $P$ a multiset of \emph{products} in $S$, 

   \item and $f:\R^n\rightarrow\R$ is a \emph{rate function}
      over molecular concentrations or numbers.

\end{itemize}

A \emph{chemical reaction network} (CRN) $\C$ is a finite set of reactions.
\end{definition}

It is worth noting that a molecular species in a reaction can be both a
reactant and a product, i.e.\ a \emph{catalyst}.
Those mathematical definitions are mainly compatible with SBML(\cite{Hucka03bi}), however there
are some differences.
Unlike SBML, we find it useful to consider only directed reactions
(reversible reactions being represented here by two reactions).

Furthermore, we enforce
the following compatibility conditions between the rate function and the
structure of a reaction:

\begin{definition}[\cite{FGS15tcs,FS08fmsb}]\label{defi:rwell}
A reaction $(R, P, f)$ over $S$ is \emph{well-formed} if the following
conditions hold:

\begin{enumerate}

   \item $f$ is a non-negative partially differentiable function,

   \item $x_i\in R$ iff ${\partial {f}}/ {\partial x_i}(\vec x)>0$ for some value
      $\vec x\in\R_+^n$,

   \item $f(x_1,\dots,x_n)=0$  iff \sylvain{there} exists $x_i\in R$ such that  $x_i=0$.

\end{enumerate}
A CRN is well-formed if all its reactions are well-formed.
\end{definition}

Those compatibility conditions are necessary to perform structural analyses of CRN dynamics.
They ensure that the reactants contribute positively to the rate of the reaction at least in some region of the concentration space (condition 2),
that the system remains positive (Prop.~2.8 in~\cite{FGS15tcs})
and that a reaction stops only when one of the reactant has been entirely consumed, whatever the rate function is.

To analyse the notion of function computed by a CRN, we will study the steady states of the ODE system,
i.e., states where $\frac{d\vec{x}}{dt}=0$, the flux $f_i$ of each reaction of
$\C$ at steady state will be called its \emph{steady flux}.

A directed weighted bipartite graph $\GC$ can be naturally associated to a
chemical reaction network $\C$, with species and reactions as vertices,
and stoichiometric coefficients, i.e.~multiplicity in the multisets $R$ and $P$, 
as weights for the incoming/outgoing edges.

\begin{example}\label{LV}
  Fig.~\ref{fig:GC_max} shows the bipartite graph $\GC$ of the Ex.~\ref{ex:max} of the introduction.
For this graph, the weights are all 1 and are not written for that reason.

\begin{figure}
	\centering
	\includegraphics[width=.9\linewidth]{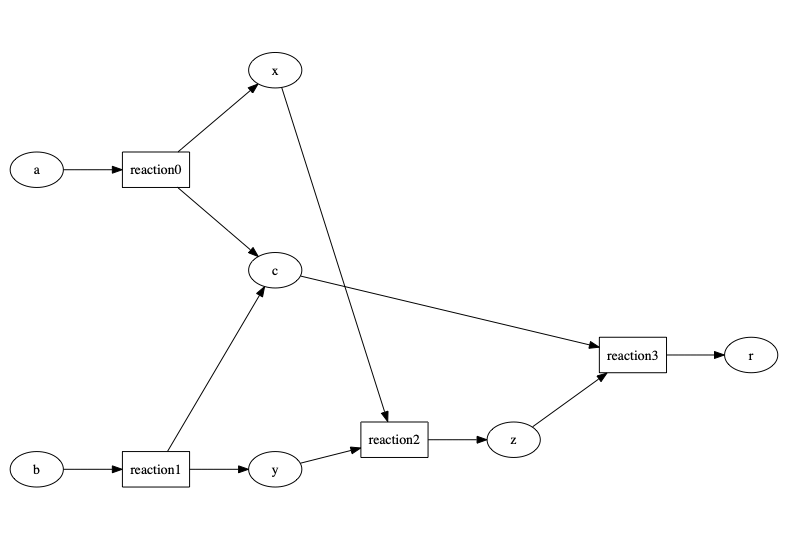}
	\caption{Bipartite graph $\GC$ associated to the CRN given in Example~\ref{ex:max}. {The weights are all equal to 1 and not displayed.}}\label{fig:GC_max}
\end{figure}
\end{example}

\subsection{CRN Semantics}

As detailed in~\cite{FS08tcs}, a CRN can be interpreted in a hierarchy of semantics with different formalisms that can be formally related by abstraction relationships in
the framework of abstract interpretation~\cite{CC77popl}.
In this article, we consider the \emph{differential semantics} which associates with a CRN $\C$ the system ODE($\C$) of Ordinary Differential Equations

\[
   \frac{dx_j}{dt} =
   \sum_{(R_i, P_i, f_i)\in\C}(P_i(j) - R_i(j))\cdot f_i
\]

\begin{example}

Assuming mass action law kinetics for the CRN of Ex.~\ref{ex:max}, the ODEs are:

\begin{align}
   da/dt &= -k_1 \cdot a\\
   db/dt &= -k_2 \cdot b\\
   dc/dt &= k_1 \cdot a + k_2 \cdot b - k_4 \cdot c \cdot z\\
   dr/dt &= k_4 \cdot c \cdot z\\
   dx/dt &= k_1 \cdot a - k_3 \cdot x \cdot y\\
   dy/dt &= k_2 \cdot b - k_3 \cdot x \cdot y\\
   dz/dt &= k_3 \cdot x \cdot y - k_4 \cdot c \cdot z
\end{align}
\end{example}

\begin{definition}~\cite{FLBP17cmsb}
  The function of time computed by a CRN $\C$ from initial state $\vec x\in\R^n$
  is, \sylvain{if} it exists, the solution of the ODE associated to $\C$ with initial conditions $\vec x\in\R^n$
  \end{definition}

\begin{definition}~\cite{FLBP17cmsb}
  The input/output function computed by a CRN \sylvain{with \(n\) species}, on an output species $z$, a set of $m$ input species $\vec y\in\R^m$
  and a fixed initial state for the other species $\vec x\in\R^{n-m}$
  is, \sylvain{if} it exists, the function $f:\R^m\rightarrow \R$
  for which the   ODEs associated to $\C$ have a solution   which moreover stabilizes on some value $f(\vec x,\vec y)$ on the $z$ species component.
  \end{definition}

\begin{definition}
  A CRN is rate-independent on an output species $z$
  if the input/output function computed on $z$ \sylvain{with all species considered as input} does not depend of the rate functions of the reactions.
  \end{definition}

\subsection{Petri Net Structure}

The bipartite graph $\GC$ of a CRN can be naturally seen as a Petri-net graph~\cite{RML93ismb,Chaouiya07bioinfo,CRT08jda},
here used with the continuous Petri-net semantics~\cite{HGD08sfm,GH06icatpn,SHK06bmcbi}.
\sylvain{The species correspond to places and the reactions to transitions of the Petri-net.}
We recall here some classical Petri-net concepts~\cite{Peterson81book,KS06bi}
used in the next section, since they may have various names depending on the community.

\begin{definition}
  A minimal semi-positive \emph{P-invariant} is a vector of $\N^{n}$ that is \sylvain{in the} left-kernel of the stoichiometric matrix.
  Equivalently it is a weighted sum over places concentrations that remains constant by any transition.

  A P-surinvariant is a weighted sum that only increases.

  The \emph{support} of a P-invariant or P-surinvariant is the set of places with non-zero value. Those places will be said to be \emph{covered} by the P-invariant or P-surinvariant.

\end{definition}

Intuitively a P-invariant is a conservation law of the CRN\@.
The notion of P-surinvariant will be used to identify the output species of a CRN.

\begin{definition}
  A \emph{siphon} is a set of places such that for each edge from a transition to any place of the siphon, there is an edge from a place of the siphon to that transition.
\end{definition}

Intuitively a siphon is a set of places that once empty remains empty\sylvain{, i.e., a set of species that cannot be produced again once they have been completely consumed}.
Our first condition for rate independence will be based on the following.

\begin{definition}
  A \emph{critical siphon} is a siphon that does not contain the support of any P-invariant.
\end{definition}

\sylvain{A siphon that is not critical contains the support of a P-invariant, therefore it cannot ever get empty. A critical siphon on the other hand is thus a set of species that might disappear completely and then always remain absent.}

\section{Rate Independence Condition for Persistent Outputs}\label{sec:siphon}

The persistence concept has been introduced to identify Petri nets for which places remain non-zero~\cite{ALS07bct}.
Here we etablish a link between this notion of persistence and the rate-independence property of the input/output function computed on some output species.

\subsection{Sufficient Graphical Condition} 

As in~\cite{ALS09cdc}, we are interested by the persistence not of the whole CRN but of some
species.
\francois{We will say that a species is an output of a CRN if it is  produced and not consumed (and thus can only increase),
  i.e.~if the stoichiometry of that species in the product part of any reaction
  is greater or equal to the reactant part}, or equivalently:

\sylvain{
  \begin{definition}\label{def:output}
    A species is an \emph{output} of a CRN if it is the singleton support of a P-surinvariant.
  \end{definition}
}

\begin{example}
  In the CRN of Ex.~\ref{ex:max}, the \(max\) is computed on a non-output node, \(c\).
  The \(min\) is computed on an output node \(r\).
  The rate independence of that CRN on \(r\) follows from Thm.~\ref{thm:poire} below.
  \end{example}
\begin{definition}
  A species is \emph{structurally persistent} if it is covered by a P-invariant and does not belong to any critical siphon.
\end{definition}

\sylvain{Such species' concentrations will not reach zero for well-formed CRNs as proved in~\cite{ALS07bct,ALS09cdc}, but this section shows that if they are also output species they converge to a value that is independent of the rates of reactions.}
\sylvain{Note that such species might still belong to some non-critical siphons, for instance siphons that cover the whole P-invariant it is part of.}

\begin{theorem}\label{thm:poire}

  If a species \(p\) of a well-formed CRN is a structurally persistent output, then that CRN is rate-independent on \(p\).

\end{theorem}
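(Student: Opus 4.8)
The plan is to prove convergence of the $p$-trajectory first, and then pin down its limit by the net structure independently of the rate functions. Since $p$ is an \emph{output}, it is the singleton support of a P-surinvariant, so $P_i(p)\ge R_i(p)$ for every reaction $i$ and hence $\frac{dp}{dt}=\sum_i (P_i(p)-R_i(p))\,f_i\ge 0$: the concentration of $p$ is non-decreasing. Because $p$ is covered by a P-invariant, there is a conservation law $\sum_{s} w_s\,s(t)=C$ with $w_p>0$ and all $w_s\ge 0$, so $w_p\,p(t)\le C$ for all $t$ and $p$ is bounded above. A bounded non-decreasing function converges, so $p(t)\to p^{*}$ exists; it then remains to show that $p^{*}$ does not depend on the $f_i$.

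Next I would analyse the limit behaviour of the reactions feeding $p$. From $\int_0^{\infty}\frac{dp}{dt}\,dt=p^{*}-p(0)<\infty$ and $\frac{dp}{dt}\ge 0$ one gets $\frac{dp}{dt}\to 0$, and since each summand $(P_i(p)-R_i(p))f_i$ is non-negative, the rate of every reaction that strictly produces $p$ tends to $0$. By well-formedness (Def.~\ref{defi:rwell}, condition~3) a reaction rate vanishes exactly when one of its reactants is $0$, so by continuity each such $p$-producing reaction is ultimately blocked by a reactant whose concentration tends to $0$. I would then invoke the persistence results of~\cite{ALS07bct,ALS09cdc}: the set $Z$ of species whose concentration tends to $0$ is a siphon, and since $p$ is \emph{structurally persistent}---covered by a P-invariant and lying in no critical siphon---we have $p\notin Z$, so $p$ stays bounded away from $0$ and is never itself the exhausted reactant.

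The crux is to convert this into rate-independence of $p^{*}$. Using a P-invariant $v$ covering $p$, the conserved quantity $\sum_s v_s\,s(t)$ equals its initial value for all $t$, so in the limit $v_p\,p^{*}=\sum_s v_s\,s(0)-\sum_{s\ne p} v_s\,s^{*}$, where the species of the support lying in $Z$ contribute $0$. The main obstacle is that the naive alternative, namely reading $p^{*}-p(0)=\sum_i (P_i(p)-R_i(p))\int_0^{\infty} f_i\,dt$ off the integrated fluxes, fails: individual reaction extents may genuinely depend on the rates, so the invariance of $p^{*}$ is not immediate. I expect to overcome this by showing that the terminal values entering the conservation identity are themselves rate-independent at a fixed initial state: combining the conservation laws carried by \emph{all} P-invariants covering the support with the siphon characterisation of $Z$, one argues that the set of ultimately-exhausted reactants, and hence the surviving terms $s^{*}$, are forced to the same values for every well-formed rate function. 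Monotonicity of $p$ then selects a single common limit $p^{*}$, establishing rate-independence on $p$.
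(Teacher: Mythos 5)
Your opening matches the paper's proof: monotonicity of $p$ from the P-surinvariant, boundedness from the covering P-invariant, hence convergence; then vanishing of the fluxes of the $p$-producing reactions, a zero reactant by well-formedness, extension of that zero species to a whole siphon of vanishing species via Prop.~1 of~\cite{ALS07bct}, and exclusion of $p$ from that siphon by structural persistence. Up to there you and the paper agree.

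The gap is exactly at the point you flag as ``the crux''. You write the conservation identity $v_p\,p^{*}=\sum_s v_s s(0)-\sum_{s\ne p}v_s s^{*}$ for an arbitrary P-invariant $v$ covering $p$ and are then stuck with the surviving terms $s^{*}$ for $s\notin Z$; your proposed escape is to show that those $s^{*}$ are themselves rate-independent, which you do not prove and which is both a stronger claim than needed and not how the argument closes. The missing idea is to \emph{choose the right P-invariant}: let $S$ be the siphon of vanishing species you constructed; then $S'=\{p\}\cup S$ is again a siphon, and since $p$ lies in no critical siphon, $S'$ is non-critical, i.e.\ it contains the support of some P-invariant $P$. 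That $P$ must give $p$ positive weight (otherwise its conserved quantity would be carried entirely by species that all tend to $0$), and every other species in its support lies in $S$ and hence vanishes at the limit. So the conservation law for $P$ has \emph{no} unknown surviving terms and reads $P(p)\,p^{*}=P\cdot\vec x^{0}$, an explicit rate-free value. Finally, since for every P-invariant $V$ covering $p$ one has $p(t)\le (V\cdot\vec x^{0})/V(p)$, and $p^{*}$ attains this bound for $P$, one gets $p^{*}=\min_{V}(V\cdot\vec x^{0})/V(p)$ over all covering P-invariants --- a purely structural quantity. Note that this closes the argument even though \emph{which} siphon empties out (and hence which $P$ realises the minimum) may a priori depend on the rates; the minimum formula is what makes the limit rate-independent without ever having to show that the other steady-state concentrations are.
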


\begin{proof}

  Since \(p\) is structurally persistent, it is covered by some P-invariant and therefore bounded.
  Since \(p\) is an \sylvain{\emph{output species}} and the CRN is well-formed, \(\frac{dp}{dt}\geq 0\).
  Hence its concentration converges to some value \(p^{*}\).

  When \(p\) reaches that steady state, all incoming reactions \sylvain{that modify it} have null flux, hence by well-formed-ness
  one of their reactants has \(0\) concentration. \sylvain{If there are only incoming reactions that do not affect \(p\) then it is trivially constant and therefore rate-independent. Otherwise there are some such incoming reactions with a null reactant.}

  Now, \sylvain{notice that} Prop.~\sylvain{1} of~\cite{ALS07bct} \sylvain{states, albeit with completely different notations,
  that if one species of a well-formed CRN reaches 0 then all the species of a siphon reach 0. Therefore}
  there exists a whole siphon \(S\) containing that reactant
  and with \(0\) concentration (intuitively, this reactant also has its input fluxes null,
  and one can thus build recursively a whole siphon).

  By construction, \(S' = \{p\}\cup S\) is also a siphon, and since \(p\) is persistent, \(S'\)
  is not critical. \(S'\) therefore covers some P-invariant \(P\) and all concentrations are null
  except that of \(p\) in \(S'\). Now \(P\) necessarily covers \(p\) since otherwise its conservation would
  be violated by having all \(0\) concentrations.

  Note that by definition, for each P-invariant \(V\) containing \(p\) we have at any time \(t\)
  \sylvain{with state vector \({\vec x}^{t}\) that \(V\cdot {\vec x}^{t} = V\cdot {\vec x}^{0}\). Hence:}

  \[p(t) = {\vec x}^{t}(p) = \frac{V^{0}-V\cdot {\vec x}_{p\rightarrow 0}^{t}}{V(p)}\leq\frac{V^{0}}{V(p)}\]
  \sylvain{where \({\vec x}_{p\rightarrow 0}^{t}\) is the state vector except for the concentration of \(p\) replaced by \(0\), and
  \(V^{0}\) is a shorthand for \(V\cdot {\vec x}^{0}\).}

  \sylvain{At steady state we get \(p^{*}=\frac{P^{0}}{P(p)}\) since we proved that all concentrations other than that of \(p\)
  are null. Hence, we have:}

  \[p^{*}=\min_{W}\frac{V^{0}}{V(p)}\]
  \sylvain{where \(W = \{V\mid V\text{ is a P-invariant covering }p\}\), and which is obviously rate-independent.\hfill\qed}%

\end{proof}

\subsection{Constraint-based Programming}\label{sec:cp}

It is well-known that there may be an exponential number of P-invariants and siphons in a Petri net.
Therefore, it is important to combine the constraints of both structural conditions for the
computation of the minimal P-invariants and the union of critical siphons,
without computing all siphons and P-invariants.
This is the essence of constraint programming and of
constraint-based modeling of such a decision problem as a constraint satisfaction problem.
Furthermore, deciding the existence of a minimal siphon containing a given place
is an NP-complete problem for which constraint programming has already shown its practical efficiency
for enumerating all minimal siphons in BioModels, see~\cite{NMFS16constraints}.

We have thus developed a \emph{constraint program} dedicated to the computation of structurally persistent species.
For the minimal P-invariants, the constraint solving problem
is the same as in~\cite{Soliman12amb} and is quite efficient on CRNs.
For the second part about critical siphons, we use a similar
approach but with Boolean variables to represent our siphons as in~\cite{NMFS16constraints}.
However, we enumerate \emph{maximal} siphons here.
This amounts to enumerate values \(1\) before \(0\), and to add in the
branch-and-bound procedure for optimization that each new siphon must include at least one new place.
Furthermore, we add the constraint
that they are critical: for each P-invariant \(P\), one of the species of its support must be absent (\(0\)).
We get the flexibility of our constraint-based approach to add this kind of supplementary constraint while
keeping some of the efficiency already demonstrated before.

In Section~\ref{eval}, this constraint program is used to compute the set of outputs and check if they are structurally persistent for many models
of the \texttt{biomodels.net} repository. 
There are however a few models on which our constraint program is quite slow.
An alternative constraint solving technique to solve those hard instances
could be to use a SAT solver, at least for the enumeration of critical siphons, as shown in~\cite{NMFS16constraints}.

\section{Global Rate Independence Condition}\label{sec:funnel}

Thm~\ref{thm:poire} above can be used to prove the rate-independence property on some output species of a CRN,
like \(r\) in Ex.~\ref{ex:max} for computing the \(max\),
but not on some intermediate species, like \(c\) for computing \(min\).
In this section we provide a sufficient condition for proving the rate-independence of a CRN on all species\@.

\subsection{Sufficient Graphical Condition}

\begin{definition}
   A chemical reaction network $\C$ is \emph{synthesis-free} if for all reactions
   $(R_i, P_i, f_i)$ of $\C$ we have $R_i\not\leq P_i$.
\end{definition}

\sylvain{In other words any reaction need to consume something to produce something.}

\begin{definition}

   A chemical reaction network $\C$ is \emph{loop-free} if there is no circuit
   in its associated graph $\GC$.

\end{definition}

\begin{definition}

  A chemical reaction network $\C$ is \emph{fork-free} if for all species $x\in S$
  there is at most one reaction $(R_i, P_i, f_i)$ such that $R_i(x)>0$.

\end{definition}

This is equivalent to saying that the out-degree of species vertices is at
most one in $\GC$.

\begin{definition}\label{def:brocoli}
   A \emph{funnel} CRN is a CRN that is: 
   \begin{enumerate}
      \item synthesis-free
      \item loop-free
      \item fork-free
   \end{enumerate}
\end{definition}

In Ex.~\ref{ex:max} for computing the maximum concentration of two input species, $A$ and $B$,
one can easily check that the CRN statisfies the funnel condition (see Fig.~\ref{fig:GC_max}).
More generally, we can prove that any well-formed funnel CRN has a single stable state
and that this state does not depend on the precise values of the parameters of
the rate functions $f_i$.

\begin{lemma}\label{lemma:dag}

   The structure of the bipartite graph $\GC$ of a funnel CRN $\C$ is a
   DAG with leaves that are only species.

\end{lemma}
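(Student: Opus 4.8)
The plan is to establish two claims separately: first that $\GC$ is acyclic (a DAG), and second that every leaf (vertex of out-degree zero) is a species rather than a reaction. The acyclicity is immediate: by Definition~\ref{def:brocoli}, a \brocoli CRN is loop-free, which by definition means there is no circuit in $\GC$. Since $\GC$ is a finite directed graph with no circuit, it is a DAG by definition. So the real content of the lemma lies entirely in the characterization of the leaves.

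For the leaf characterization, I would argue by contradiction: suppose some reaction vertex $t = (R_i, P_i, f_i)$ is a leaf, i.e.\ has out-degree zero in $\GC$. An outgoing edge from $t$ goes to a species $x$ with $P_i(x) > 0$, so $t$ being a leaf means $P_i$ is empty, i.e.\ $P_i(x) = 0$ for every species $x$. The key step is to derive a contradiction with the synthesis-free condition. Since the CRN is synthesis-free, we have $R_i \not\leq P_i$; but if $P_i$ is the empty multiset, then $R_i \not\leq P_i$ forces $R_i$ to be nonempty as well (there must be some species consumed strictly more than it is produced). I would then note that a nonempty $R_i$ together with an empty $P_i$ is exactly a reaction that consumes reactants without producing anything. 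This does not yet contradict leaf-ness directly, so the argument must instead show that such a degenerate reaction is excluded, or that it does not prevent $\GC$ from having only species leaves.

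Here I expect the main obstacle, and I would resolve it by reconsidering what ``leaf'' should mean and invoking well-formedness together with the synthesis-free hypothesis more carefully. The cleanest route is: a reaction transition $t$ has an \emph{incoming} edge from every reactant (each $x \in R_i$ contributes an edge species$\to$transition) and an \emph{outgoing} edge to every product. By the synthesis-free condition $R_i \not\leq P_i$, the reactant multiset cannot be dominated by the product multiset; in particular $R_i$ is nonempty, so every reaction has at least one incoming edge and is therefore not a \emph{source}. What I actually want is the dual statement about leaves, so I would instead show directly that a transition with empty $P_i$ cannot arise: combining $R_i \not\leq P_i$ (synthesis-free) with the well-formedness requirement that the reaction have a sensible rate function, one concludes products exist whenever needed, or one simply adopts the convention that a reaction with no products is disallowed. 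Thus every reaction vertex has out-degree at least one, so reaction vertices are never leaves, and the only possible leaves of the DAG are species vertices. This completes the argument; the delicate point throughout is matching the combinatorial notion of ``leaf of $\GC$'' to the algebraic synthesis-free inequality $R_i \not\leq P_i$, and making sure that an empty product set is correctly excluded.
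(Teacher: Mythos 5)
There is a genuine gap, and it stems from a misreading of what ``leaf'' means in this lemma. You take a leaf of $\GC$ to be a \emph{sink} (out-degree zero) and therefore try to rule out reactions with empty product multiset; you correctly observe that synthesis-freeness does not do this, and you then paper over the hole by ``adopting the convention that a reaction with no products is disallowed,'' which is supported by nothing in the definitions (well-formedness constrains only the rate function and the reactants, not the products). Under your reading the statement is in fact \emph{false}: a pure degradation reaction $a \Rightarrow \emptyset$ with rate $k\cdot a$ is well-formed, synthesis-free, loop-free and fork-free, yet its transition vertex has out-degree zero. So no argument along the line you propose can close the gap.

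The intended meaning, which is unambiguous from how the lemma is used later (the proof of Lemma~\ref{lemma:fluxes} says of a leaf species $x$ that ``there is no production of $x$ as it is a leaf,'' and Lemma~\ref{lemma:xplus} says ``if $x_i$ is a leaf, then since nothing produces it\dots''), is that a leaf is a \emph{source}: a vertex with no incoming edges, i.e.\ a minimal element of the topological order from which the funnel drains. With that reading the proof is exactly the one-line argument you mention and then discard as ``the dual statement'': since the empty multiset satisfies $\emptyset \leq P_i$ for every $P_i$, the synthesis-free condition $R_i \not\leq P_i$ forces $R_i \neq \emptyset$, so every reaction vertex has at least one incoming edge and hence cannot be a leaf. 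Your acyclicity step is correct and identical to the paper's; the second step should simply be the source argument you already had in hand.
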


\begin{proof}
   Since $\C$ is loop-free, $\GC$ is acyclic.
   Since $\C$ is synthesis-free, leaves cannot be reactions.\hfill\qed%
\end{proof}

\begin{lemma}\label{lemma:fluxes}
   All steady fluxes of a funnel CRN $\C$ are equal to $0$.
\end{lemma}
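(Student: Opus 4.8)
The plan is to show that at any steady state each reaction has flux $0$, by a Noetherian induction over the reactions following the topological structure of $\GC$. By Lemma~\ref{lemma:dag} the graph $\GC$ is a DAG, so the reachability relation on reactions --- write $r' \prec r$ whenever there is a directed path from $r'$ to $r$ in $\GC$ --- is a strict partial order, which is exactly what is needed to run an induction from the sources of the DAG towards its sinks.

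For the induction step, I would fix a reaction $r = (R_r, P_r, f_r)$ and assume $f_{r'} = 0$ for every reaction $r' \prec r$. Since $\C$ is synthesis-free we have $R_r \not\leq P_r$, so there is a species $a$ that is strictly net-consumed by $r$, i.e.\ $R_r(a) > P_r(a)$ (in particular $a$ is a reactant of $r$). I would then examine the differential equation for $a$, namely $\frac{da}{dt} = \sum_{(R_i,P_i,f_i)\in\C} (P_i(a) - R_i(a))\, f_i$, and argue that at steady state it collapses to the single term coming from $r$.

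Two structural facts drive this reduction. First, because $\C$ is fork-free and $a$ is already a reactant of $r$, no other reaction can have $a$ as a reactant; hence every term with index $i \neq r$ is of the form $P_i(a)\, f_i$, with no consumption contribution. Second, any reaction $r'$ with $P_{r'}(a) > 0$ and $r' \neq r$ creates a path $r' \to a \to r$ in $\GC$, so $r' \prec r$ and $f_{r'} = 0$ by the induction hypothesis. All such terms therefore vanish, leaving $\frac{da}{dt} = (P_r(a) - R_r(a))\, f_r$. At steady state the left-hand side is $0$, and since $P_r(a) - R_r(a) < 0$ we conclude $f_r = 0$, which closes the induction; the base case (when $r$ has no $\prec$-predecessor) is subsumed, as there are then no producer terms to cancel.

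The subtle point I expect to be the main obstacle is the careful bookkeeping around the chosen reactant $a$: one must isolate the contribution of $r$ itself from those of the other reactions and ensure that $a$ being possibly a (net-consumed) catalyst of $r$ does not spoil the argument, which is precisely why it matters to pick a \emph{strictly} net-consumed $a$ rather than an arbitrary reactant. Everything else is routine. Note in particular that the argument uses only the ODE semantics together with the three funnel conditions (the loop-freeness entering through Lemma~\ref{lemma:dag}), and needs neither well-formedness nor even non-negativity of the fluxes.
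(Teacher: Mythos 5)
Your proof is correct and follows essentially the same route as the paper's: an induction along the DAG order of reactions, using synthesis-freeness to pick a strictly net-consumed reactant, fork-freeness to rule out other consumers, and the induction hypothesis to kill the producer terms, leaving $(P_r(a)-R_r(a))f_r=0$ with a nonzero coefficient. Your Noetherian phrasing (which subsumes the base case) and the explicit remark that neither well-formedness nor flux non-negativity is needed are only presentational refinements of the paper's argument.
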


\begin{proof}

   Let us prove the lemma by induction on the topological order of reactions
   in $\GC$, this is enough thanks to Lemma~\ref{lemma:dag}.

   For the base case (smallest reaction in the order), at least one of the
   species $x$ such that $R_i(x)>P_i(x)$ is a leaf (synthesis-freeness),
   then notice that at steady state,
   $\frac{dx}{dt} = 0 = (P_i(x)-R_i(x))f_i$
   since there is no production of $x$ as it is a leaf, and no other
   consumption as $\C$ is fork-free.
   Hence $f_i = 0$.

   For the induction case, consider a reactant
   $x$ s.t.\ $R_i(x)>P_i(x)$ of our reaction.
   By induction hypothesis, at steady state we have
   $\frac{dx}{dt} = \sylvain{0 =} (P_i(x)-R_i(x))f_i$
   since all productions of $x$ are \sylvain{lower in the topological order}, and
   there is no other consumption of $x$ as $\C$ is fork-free.
   Hence $f_i=0$.\hfill\qed%

\end{proof}

\begin{definition}\label{def:xplus}
   
   We shall denote $x_i^+$ the total amount of species $x_i$ available in an
   execution of the corresponding ODE system.

   \[
      x_i^+ = x_i^0 + \int_0^{+\infty} \frac{dx_i^+}{dt} = x_i^0 +
      \int_0^{+\infty}\sum_{P_j(x_i)>R_j(x_i)}(P_j(x_i)-R_j(x_i))f_j
   \]

\end{definition}

\begin{lemma}\label{lemma:xplus}

   Let $\C$ be a well-formed funnel CRN, then for each initial state
   $\vec x^0$, if $\C$ reaches a steady state $\vec x^*$, then the total
   amount $x_i^+$ of any species $x_i$ can be computed and is independent
   from the kinetic functions $f_j$ of $\C$.

\end{lemma}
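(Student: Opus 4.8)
The plan is to prove Lemma~\ref{lemma:xplus} by induction on the topological order of species in the DAG $\GC$ (Lemma~\ref{lemma:dag}), computing each $x_i^+$ explicitly from the total amounts of species lower in the order, in such a way that no rate function $f_j$ ever appears. The key observation is that Lemma~\ref{lemma:fluxes} tells us all steady fluxes vanish, so the \emph{rate} of each reaction is irrelevant at the steady state; what survives is only the total integrated flux $\phi_j = \int_0^{+\infty} f_j\,dt$ through each reaction. The heart of the argument is to show that each $\phi_j$ is itself determined purely by stoichiometry and initial conditions, independently of the $f_j$.

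First I would set up the induction. For a source species $x_i$ (minimal in the topological order, i.e.\ one with no incoming reactions), fork-freeness guarantees that $x_i$ is consumed by at most one reaction, and synthesis-freeness plus the absence of producers means $x_i^+ = x_i^0$ immediately, with no rate dependence. For the inductive step, consider a species $x_i$ and let $\phi_j = \int_0^{+\infty} f_j\,dt$ be the total flux through reaction $j$. By Definition~\ref{def:xplus}, $x_i^+ = x_i^0 + \sum_{P_j(x_i)>R_j(x_i)}(P_j(x_i)-R_j(x_i))\,\phi_j$, so it suffices to show that each such $\phi_j$ is computable and rate-independent.

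The crux is computing $\phi_j$ for a reaction $j$. The idea is to pick a reactant $x_k$ of reaction $j$ with $R_j(x_k) > P_j(x_k)$ (which exists by synthesis-freeness). By fork-freeness, reaction $j$ is the \emph{only} consumer of $x_k$, so the entire amount of $x_k$ that is ever available, namely $x_k^+$, is consumed solely through reaction $j$. Integrating $dx_k/dt$ from $0$ to $+\infty$ and using the fact that $x_k$ reaches a steady value $x_k^*$, one gets a conservation relation of the form $x_k^+ = x_k^* + (R_j(x_k)-P_j(x_k))\,\phi_j$, which lets us solve $\phi_j = (x_k^+ - x_k^*)/(R_j(x_k)-P_j(x_k))$. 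Since $x_k$ is lower in the topological order, $x_k^+$ is known by induction hypothesis; the only remaining subtlety is pinning down $x_k^*$. Here I would argue that at steady state $x_k^*$ is itself forced to a rate-independent value — intuitively $x_k$ is driven to $0$ unless it has some residual determined by a conservation law — so that $\phi_j$, and hence $x_i^+$, depend only on stoichiometry and the $x_l^+$ for $l$ lower in the order.

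The main obstacle I expect is exactly this determination of the intermediate steady values $x_k^*$ and showing they are rate-independent: a single species may feed several downstream reactions (it can be a reactant of one reaction but a product of several), so disentangling its total outflow from its residual steady concentration requires care. Fork-freeness is the decisive structural hypothesis here, since it guarantees each species is drained by at most one reaction and thus its outgoing flux is unambiguous; combined with the vanishing of steady fluxes from Lemma~\ref{lemma:fluxes}, this should let the induction close. I would treat the case $x_k^* = 0$ (full consumption) and the case $x_k^* > 0$ (a conservation law forces a leftover) uniformly by expressing $x_k^*$ through the integrated stoichiometric balance, so that every quantity in the final expression for $x_i^+$ is a rational combination of stoichiometric coefficients and initial concentrations with no $f_j$ appearing.
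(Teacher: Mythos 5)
Your overall strategy is the same as the paper's: induction on the topological order of species in the DAG $\GC$ (Lemma~\ref{lemma:dag}), base case $x_i^+=x_i^0$ for species with no producers, and an inductive step that reduces $x_i^+$ to the total extents $\phi_j=\int_0^{+\infty}f_j\,dt$ of the reactions producing $x_i$, with fork-freeness guaranteeing that each consumed reactant is drained by a single reaction. However, there is a genuine gap exactly at the point you flag as the ``remaining subtlety''. For an \emph{arbitrary} reactant $x_k$ with $R_j(x_k)>P_j(x_k)$, the integrated balance $x_k^+=x_k^*+(R_j(x_k)-P_j(x_k))\,\phi_j$ is one equation in the two unknowns $\phi_j$ and $x_k^*$; your plan to ``express $x_k^*$ through the integrated stoichiometric balance'' is circular, because that balance is precisely the equation you are trying to solve for $\phi_j$. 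A generic reactant of $j$ need not be depleted at steady state, and its residual $x_k^*$ is not supplied by any conservation law available in a funnel CRN, so as written the induction does not close.

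The missing idea, which is how the paper's proof closes, is to choose not an arbitrary reactant but the \emph{limiting} one. Lemma~\ref{lemma:fluxes} gives $f_j=0$ at steady state, and condition~3 of well-formedness (Def.~\ref{defi:rwell}) then forces some reactant $x_{j_0}$ of $j$ to satisfy $x_{j_0}^*=0$. For that reactant the unknown residual disappears and $\phi_j=x_{j_0}^+/(R_j(x_{j_0})-P_j(x_{j_0}))$. Moreover, $j_0$ is itself identified rate-independently as the reactant that runs out first, i.e.\ the minimizer over $\{x_k\mid R_j(x_k)>P_j(x_k)\}$ of $x_k^+/(R_j(x_k)-P_j(x_k))$, and every quantity entering this minimization is available by the induction hypothesis because loop-freeness ensures all reactants of $j$ strictly precede $x_i$ in the topological order. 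Substituting this choice of $x_{j_0}$ for your arbitrary $x_k$ turns your sketch into the paper's argument; without it, the step determining $x_k^*$ fails.
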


\begin{proof}
   
   Let us proceed by induction on the topological order of species $x_i$ in
   $\GC$.

   If $x_i$ is a leaf, then since nothing produces it $x_i^+ = x_i^0$.

   Now let us look at the induction case for $x_i$, and consider
   the set $J$ of reactions producing $x_i$ (i.e., such that
   $P_j(x_i)>R_j(x_i)$).

   From Lemma~\ref{lemma:fluxes} we know that for all these reactions $f_j =
   0$ at stable state, and since $\C$ is well-formed, it means that there
   exists at least one species $x_{j_0}$ such that $x_{j_0}^*=0$.
   As $\C$ is fork-free and well-formed $x_{j_0}$ has only been consumed by
   reaction $r_j$, which
   led to precisely producing an amount of $x_i$ equal to
   $x_{j_0}^+(P_j(x_i)-R_j(x_i))/(R_j(x_{j_0})-P_j(x_{j_0}))$,
   where $x_{j_0}^+$ is available via induction hypothesis.
   Note also that $j_0 = \argmin_{x_k\mid R_j(x_k)>P_j(x_k)} x_k^+(R_j(x_k)-P_j(x_k))$
   since the reaction will stop as soon as it has depleted one of its inputs.

   Hence $x_i^+ = x_i^0+\sum_J
   x_{j_0}^+(P_j(x_i)-R_j(x_i))/(R_j(x_{j_0})-P_j(x_{j_0}))$,
   which only depends on the initial state and the stoichiometry.\hfill\qed%

\end{proof}

\begin{theorem}\label{thm:ri}

   Let $\C$ be a well-formed funnel CRN, then the ODE system associated to
   $\C$ has a single steady state $\vec x^*$ that
   does not depend on the kinetic functions $f_i$ of $\C$.

\end{theorem}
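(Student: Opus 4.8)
The plan is to assemble the three lemmas into a single closed-form expression for each component of $\vec{x}^*$, from which existence, uniqueness, and rate-independence all follow at once. I would organize everything around the \emph{total flux} $\phi_j = \int_0^{+\infty} f_j\,dt$ of each reaction $r_j$, since every steady concentration can be written in terms of these integrated fluxes, and the heart of the matter is showing each $\phi_j$ is rate-independent.

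First I would settle existence of a steady state. By Lemma~\ref{lemma:dag} the graph $\GC$ is a DAG whose leaves are species, so I would process reactions in topological order: each reaction $r_j$ must consume some reactant (synthesis-freeness), whose total available amount $x_{j_0}^+$ of Def.~\ref{def:xplus} is finite; because fork-freeness forbids any competing consumer, that reactant is monotonically depleted, forcing $f_j \to 0$. Propagating this along the topological order shows every species concentration converges, so $\vec{x}^*$ exists.

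Next, for uniqueness and rate-independence, the identity I would exploit is that at steady state the concentration of $x_i$ equals the total amount produced minus the total amount consumed:
\[
   x_i^* = x_i^+ - \sum_{j\,:\,R_j(x_i) > P_j(x_i)} (R_j(x_i) - P_j(x_i))\,\phi_j .
\]
Here $x_i^+$ is rate-independent by Lemma~\ref{lemma:xplus}. By fork-freeness the consumption sum has at most one term, indexed by the unique reaction $r_k$ that consumes $x_i$ (and the sum is empty, giving $x_i^* = x_i^+$, when no reaction consumes $x_i$). It then remains only to show $\phi_k$ is rate-independent, and this is precisely what the proof of Lemma~\ref{lemma:xplus} already yields: the total flux is $\phi_k = x_{k_0}^+ / (R_k(x_{k_0}) - P_k(x_{k_0}))$, where $x_{k_0}$ is the limiting reactant of $r_k$. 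Since $x_{k_0}^+$ is rate-independent, so is $\phi_k$, hence so is $x_i^*$. As the resulting expression depends only on the initial state and the stoichiometry, the steady state is also unique.

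The main obstacle I anticipate is the existence/convergence step rather than the rate-independence, which reduces to bookkeeping once Lemma~\ref{lemma:xplus} is available. The delicate point is justifying, termwise along the topological order, that each reaction's flux genuinely integrates to the finite value dictated by its limiting reactant, and that the limits exist. One must be careful that \emph{fork-freeness} is exactly what prevents a species from being drained by two reactions at once (which would spoil the clean ``produced minus consumed'' accounting) and that \emph{synthesis-freeness} guarantees each reaction possesses a genuine reactant to deplete. The boundary case where $x_i$ is consumed by no reaction also needs an explicit mention, so that the empty-sum convention $x_i^* = x_i^+$ is unambiguous.
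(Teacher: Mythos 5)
Your proposal is correct and follows essentially the same route as the paper: both obtain $x_i^*$ as $x_i^+$ (Lemma~\ref{lemma:xplus}) minus the consumption of the unique reaction consuming $x_i$ (fork-freeness), whose integrated flux is determined by its limiting reactant $x_{j_0}$, with the empty-consumer case giving $x_i^*=x_i^+$. The only point where you diverge is the convergence step: the paper writes $x_i(t)=x_i^0+F_e(t)-F_s(t)$ and concludes from monotone boundedness of the two integrals, which is a little more direct than your claim that depletion of the limiting reactant forces $f_j\to 0$ (a statement that is stronger than needed and not immediate); otherwise the two arguments coincide.
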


\begin{proof}

   From proof of Lemma~\ref{lemma:xplus} one notices that
   either $x_i$ is not consumed at all and we have $x_i^*=x_i^+$ or if $j$ is the only
   reaction consuming $x_i$, its total consumption is given by
   $x_{j_0}(R_j(x_{j_0})-P_j(x_{j_0}))$, with $j_{0}$ defined as in the proof of Lemma~\ref{lemma:xplus}.

   These $x_i^*$ do not depend on the kinetic functions $f_i$ of $\C$.

  We prove now that every $x_i$ is convergent. It can be first noticed that
  \[x_i(t) = x_{i_0} + F_e(t) - F_s(t)\]
  where $F_e(t) = \int_0^{t}\sum_{P_j(x_i)>R_j(x_i)}(P_j(x_i)-R_j(x_i))f_j$ is the incoming flux
  and $F_s(t) = \int_0^{t}(P_k(x_i)-R_k(x_i))f_k$ is the outgoing flux.\\
  $F_e$ is the integral of a positive quantity, it is then an increasing function.
  Moreover, as $F_e(t) \le x_i^+$, this function is bounded and then converges to a real number limit.\\
  Similarly, $F_s$ is increasing and, as $x_i(t)\ge 0$, we have $F_s(t)\le x_i^+$ then it is bounded and converges.\\
  To conclude, $x_i$ is a difference of two convergent functions, hence it converges to a real number.
\end{proof}
\begin{corollary}
  Any well-formed funnel
  CRN is rate-independent for any output species.
\end{corollary}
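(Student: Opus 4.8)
The plan is to read this off directly from Thm.~\ref{thm:ri}, which already does all the analytic work; the corollary is essentially that theorem specialised to the formal notion of rate-independence defined earlier. Recall that Thm.~\ref{thm:ri} asserts that a well-formed funnel CRN has a single steady state $\vec x^*$ that is independent of the kinetic functions $f_i$. The only gap between that statement and the corollary is terminological: the theorem speaks about the whole steady-state vector, whereas the corollary is phrased through the defined input/output function on an output species.

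First I would unfold the definition of the input/output function computed on an output species $z$. With all species taken as inputs, so that the initial state ranges over $\R^n$, the computed value is by definition the limit of the $z$-component of the solution trajectory, provided that limit exists. The convergence part of the proof of Thm.~\ref{thm:ri} shows that every component $x_i(t)$ is a difference of two bounded monotone fluxes and hence converges; in particular the $z$-component converges, so the ``if it exists'' clause in the definition is satisfied and the input/output function is well-defined for every initial state.

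Second, I would identify this limit with the corresponding entry of the unique steady state $\vec x^*$. Since the trajectory converges and $\vec x^*$ is precisely the steady state it reaches, the limit of the $z$-component equals $z^*$, the $z$-entry of $\vec x^*$. By Thm.~\ref{thm:ri}, $\vec x^*$, and therefore each of its entries, in particular $z^*$, is independent of the choice of the rate functions $f_i$. Thus the input/output function on $z$ does not depend on the rates, which is exactly rate-independence on $z$.

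I do not expect any real obstacle: the argument is a restatement of Thm.~\ref{thm:ri}, and the single point deserving care, namely that convergence guarantees the function is defined at all, is already contained in the final paragraph of that proof. Note moreover that the reasoning nowhere uses that $z$ is an output species, so the same argument in fact yields rate-independence on every species; the restriction to output species in the statement is merely the case of interest for this paper.
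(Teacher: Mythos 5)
Your proposal is correct and matches the paper's (implicit) argument: the corollary is stated without proof precisely because it follows immediately from Thm.~\ref{thm:ri}, whose final paragraph supplies the convergence needed for the input/output function to be defined and whose main claim supplies rate-independence of each component of $\vec x^*$. Your closing remark that the argument never uses the output-species hypothesis is also consistent with the paper's own comment that the condition yields rate-independence for \emph{all} species.
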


We have thus given here a sufficient condition for a very strong notion of \emph{rate-independence}
in which all the species of the CRN have a steady state independent of the reaction rates, as in Ex.~\ref{ex:max}.

\subsection{Necessary Condition}

Our sufficient condition is not a necessary condition for global rate independence.
Basically, forks that join and circuits that leak do not prevent rate independence:

\begin{example}\label{ex:leak_loop}
The CRN
\begin{lstlisting}
a=>b.
b=>a.
b=>c.
\end{lstlisting}
is not a funnel CRN as it has both a loop (formed by $a$ and $b$) and a fork ($b$ is a reactant in two distinct reactions).
Nevertheless, this CRN is rate-independent on all species. The circuit formed by $a$ and $b$ has a leak with the third reaction.
Every molecule of $a$ and $b$ will thus be finally transformed into $c$ whatever the reaction kinetics are.
At the steady state, the concentration of $a$ and $b$ will be null, and the concentration of $c$ will be the sum of all the initial concentrations.
\end{example}

Nevertheless, we can show that any function computable by a rate independent CRN can be computed by a funnel CRN\@.
We first show that funnel CRNs are composable under certain conditions for rate independent CRNs,
similarly to the composability conditions given in~\cite{CKRS18cmsb}.

\begin{definition}
  Two CRNs $\C_{1}$ and $\C_{2}$ are \emph{composable} if
  \[(\bigcup_{(R, P, f)\in\C_{1}}R\cup P)\cap (\bigcup_{(R',P', f')\in\C_{2}}R'\cup P') = \{x\}\]
  i.e., there is a single species appearing in both sets of reactions.

  The \emph{composition} of $\C_{1}$ and $\C_{2}$ is the union of their sets of reactions.
  The species $x$ is called the \emph{link} between both CRNs
\end{definition}

\begin{lemma}\label{lemma:compo}
The composition of two funnel CRNs is a funnel CRN if the composition does not create forks on their link.
\end{lemma}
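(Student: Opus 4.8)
The plan is to verify the three defining conditions of a funnel CRN (Def.~\ref{def:brocoli}) directly for the union $\C = \C_1 \cup \C_2$, exploiting the fact that composability forces the two reaction sets to share only the link species $x$. Concretely, every reaction of $\C$ belongs, unchanged, to exactly one of $\C_1$ or $\C_2$, and every species other than $x$ occurs in the reactions of exactly one of the two networks. I would set up this partition first, since it is what localizes any potential violation of a funnel condition to a single component or to the link.

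Synthesis-freeness is then immediate: each reaction $(R_i,P_i,f_i)$ of $\C$ is a reaction of $\C_1$ or of $\C_2$, and since both are synthesis-free it already satisfies $R_i \not\leq P_i$; taking the union does not alter individual reactions. Fork-freeness I would split according to the species involved. For any species $y \neq x$, all reactions with $R_i(y)>0$ lie in the single network containing $y$, which is fork-free, so at most one such reaction exists. For the link $x$, each of $\C_1$ and $\C_2$ contributes at most one consuming reaction, so the only way fork-freeness could fail is by having one consumer in each network; this is exactly the situation excluded by the hypothesis that the composition creates no fork on the link. Hence $\C$ is fork-free.

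The main obstacle is loop-freeness, i.e.\ acyclicity of $\G_{\C_1\cup\C_2}$. Here I would argue that gluing two acyclic bipartite graphs along the single shared vertex $x$ cannot create a directed circuit. Each edge of $\G_{\C_1\cup\C_2}$ is incident to a reaction, and since reactions are partitioned between the two networks, each edge belongs to exactly one of $\G_{\C_1}$ or $\G_{\C_2}$; moreover any vertex other than $x$---whether a species or a reaction---lies in exactly one of the two subgraphs. Suppose for contradiction there is a directed circuit, which we may take to be simple. If it avoids $x$, all its vertices lie in one subgraph and it is a circuit there, contradicting that component's loop-freeness. If it passes through $x$, it visits $x$ exactly once, so after leaving $x$ the first edge commits the circuit to one subgraph, say $\G_{\C_1}$; every subsequent vertex is then a non-$x$ vertex of $\C_1$, forcing every subsequent edge to stay in $\G_{\C_1}$ until the circuit returns to $x$. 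Thus the whole circuit lies in $\G_{\C_1}$, again contradicting loop-freeness of $\C_1$. Either way we reach a contradiction, so $\G_{\C_1\cup\C_2}$ is acyclic.

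Combining the three properties yields that $\C_1\cup\C_2$ is a funnel CRN. The delicate point to state carefully is the circuit argument: the only crossing vertex between the two glued graphs is $x$, and a directed circuit cannot switch subgraphs without revisiting $x$, which a simple directed circuit never does. The synthesis- and fork-free parts are essentially bookkeeping, with the hypothesis playing its role solely at the link, for fork-freeness.
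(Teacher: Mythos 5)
Your proof is correct and follows essentially the same route as the paper's: verify the three funnel conditions on the union, noting that the single shared species localizes any new synthesis, circuit, or fork to the link, where only forks can arise and are excluded by hypothesis. You simply spell out the acyclicity argument (a simple directed circuit cannot switch components without revisiting $x$) in more detail than the paper does.
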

\begin{proof}
As the reaction rates of the two original CRNs are well-formed, the reaction rates of the resultant CRN are well-formed too.
No synthesis and no loop can be created by the union of two CRNs as all species are different except for the link.
Therefore, the condition to create no fork by composition is sufficient to ensure that the resultant CRN is a funnel CRN\@.
\end{proof}

\begin{corollary}\label{cor:compo}
The composition of two funnel CRNs is a funnel CRN if the link $x$ is reactant in at most one of the CRNs.
\end{corollary}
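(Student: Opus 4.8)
The plan is to show that Corollary~\ref{cor:compo} follows directly from Lemma~\ref{lemma:compo} by verifying that the stated hypothesis---that the link species $x$ is a reactant in at most one of the two CRNs---is sufficient to guarantee the no-fork condition required by the lemma. First I would recall the definition of fork-freeness: a CRN is fork-free if for every species there is at most one reaction in which that species appears as a reactant (equivalently, out-degree at most one in $\GC$). Since both $\C_1$ and $\C_2$ are assumed to be funnel CRNs, each is individually fork-free, so within each CRN every species (including $x$) is consumed by at most one reaction.

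The heart of the argument is that composition can only create a fork at the link $x$, and nowhere else. Indeed, by the definition of composability, $x$ is the unique species shared between the reaction sets of $\C_1$ and $\C_2$; every other species appears in only one of the two CRNs and therefore retains its original out-degree, which was at most one. Hence the only way the union $\C_1\cup\C_2$ could fail to be fork-free is if $x$ is a reactant in one reaction of $\C_1$ \emph{and} in one reaction of $\C_2$, giving it out-degree two in $\GC$. The hypothesis of the corollary precisely rules this out: if $x$ is a reactant in at most one of the CRNs, then across the composition $x$ is consumed by at most one reaction, so its out-degree remains at most one.

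Therefore, under the stated hypothesis, the composition creates no fork on the link $x$, and by the argument above no fork is created anywhere else either. This is exactly the premise of Lemma~\ref{lemma:compo}, which then yields that $\C_1\cup\C_2$ is a funnel CRN. The step I expect to require the most care is the verification that no \emph{new} fork can arise at a non-link species: this rests entirely on the composability condition ensuring disjointness of all species except $x$, so I would state that explicitly rather than treat it as obvious. Once that observation is in place, the corollary is an immediate specialization of the lemma, so no further calculation is needed.
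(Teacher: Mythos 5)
Your proof is correct and follows essentially the same route as the paper's: both argue that fork-freeness of each funnel CRN plus the hypothesis that $x$ is a reactant in at most one of them forces $x$ to appear as a reactant in at most one reaction of the union, so no fork is created on the link and Lemma~\ref{lemma:compo} applies. Your additional remark that non-link species cannot acquire new forks is a detail the paper folds into the proof of Lemma~\ref{lemma:compo} itself, so nothing is missing on either side.
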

\begin{proof}
Since both CRNs are funnel, $x$ is a reactant in at most one reaction in each.
Now from our hypothesis it is not reactant at all in one of the CRNs, hence it appears as reactant in at most
one reaction and therefore in no fork.
By Lemma~\ref{lemma:compo}, the resulting CRN is a funnel CRN\@.
\end{proof}

\begin{theorem}
Any function computable by a rate independent CRN is computable by a funnel CRN\@.
\end{theorem}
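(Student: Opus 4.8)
The plan is to reduce the theorem to the known mathematical characterization of rate-independent computation and then realize every function in that class by a funnel CRN built compositionally. Recall from~\cite{CDS14itcs,CKRS18cmsb} that a function is computable by a rate-independent CRN if and only if it is (positive-)continuous and piecewise linear with rational coefficients. Hence it suffices to prove that every such piecewise linear function $f$ over $\Rp$ is computable by some funnel CRN. I would fix a finite set of primitive functions that generate the whole class under composition: the copy $x\mapsto(x,x)$, addition $(x,y)\mapsto x+y$, multiplication by a positive rational constant $x\mapsto qx$, and the lattice operations $\min$ and $\max$. That these generate all continuous piecewise linear functions is standard (by the max-min, i.e.\ lattice, representation of piecewise linear functions: any such $f$ is a finite $\max$ of finite $\min$s of affine forms), and is essentially the generating set underlying the composability results of~\cite{CKRS18cmsb}.

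Next I would exhibit, for each primitive, a funnel CRN computing it on a designated output species. Copy is the single reaction $x\Rightarrow x_1+x_2$; addition is the pair $x\Rightarrow z,\ y\Rightarrow z$ routing both inputs into a common output; scalar multiplication by $q=n/d$ is the reaction $d\,x\Rightarrow n\,z$ with the appropriate stoichiometry; $\min$ is the reaction $a+b\Rightarrow c$ of Ex.~\ref{ex:min}; and $\max$ is the CRN of Ex.~\ref{ex:max}. In each of these the inputs are reactants of exactly one reaction and the output is produced but never consumed, so each gadget is trivially synthesis-free, loop-free and fork-free, hence a funnel CRN, and Thm.~\ref{thm:ri} guarantees that each computes its primitive independently of the rates.

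Finally I would assemble a CRN realizing $f$ by following its expression tree in the generating set, wiring the output species of each gadget to the matching input species of the next one and renaming all internal species so that only the intended link is shared. Composing one link at a time (each binary primitive being joined to its two upstream gadgets in two successive compositions, each sharing a single species), the crucial point is that the funnel property is preserved, which is exactly where Cor.~\ref{cor:compo} applies: a link species is by construction the output of the upstream gadget, where it is never a reactant, so it is a reactant in at most one of the two composed CRNs and no fork is created. That the assembled network computes the intended composition of functions then follows from Lemma~\ref{lemma:fluxes} and Lemma~\ref{lemma:xplus} applied to the composite: all steady fluxes vanish, the total amount that flows through each link equals the value produced upstream, and this total is precisely what the downstream gadget consumes, so the computed steady-state values compose.

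The main obstacle is the fan-out problem. Fork-freeness forbids one species from being consumed by two different gadgets, so whenever the expression tree reuses a value (an input fed to several primitives, or an intermediate result duplicated) I must first route it through a copy gadget $x\Rightarrow x_1+x_2$ before branching. The delicate part of the argument is the bookkeeping this requires: I must check that inserting copy gadgets leaves the computed value unchanged (each copy is itself rate-independent by Thm.~\ref{thm:ri}, producing two exact replicas of the total amount) and that, once every shared variable has been made single-use in this way, the composability hypothesis of Cor.~\ref{cor:compo} holds at \emph{every} link. Granting this, a straightforward induction on the structure of the expression, using Lemma~\ref{lemma:compo} and Cor.~\ref{cor:compo} at each node, shows the fully assembled network is a funnel CRN, and by the corollary to Thm.~\ref{thm:ri} it computes $f$ rate-independently on its output species.
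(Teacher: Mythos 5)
Your proof follows essentially the same route as the paper's: both start from the max--min (Ovchinnikov-style) representation of the piecewise linear functions characterized in~\cite{CDS14itcs,CKRS18cmsb}, realize the rational linear forms, $\min$ and $\max$ by funnel gadgets (Ex.~\ref{ex:min}, Ex.~\ref{ex:max}), and assemble them via Cor.~\ref{cor:compo}; your explicit copy gadget for fan-out is in fact a point the paper glosses over, since reusing an input $x_i$ as a reactant of several linear-form gadgets would otherwise create a fork. One caveat: your claim that a link species is ``never a reactant'' in the upstream gadget fails for the $\max$ gadget of Ex.~\ref{ex:max}, whose output $c$ is consumed by $c+z\Rightarrow r$, so composing $\max$ \emph{upstream} of anything (e.g.\ nesting binary maxes for $q>2$) is not covered by Cor.~\ref{cor:compo} as you state it --- this is harmless only if $\max$ sits at the root of the expression or is realized by a single $q$-ary gadget.
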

\begin{proof}
Using the same theorem from Ovchinnikov~\cite{Ovchinnikov02cag} as
in~\cite{CDS14itcs} we note that any such function $f$ with components
$f_j, 1\leq j\leq p$ can be written as
$f(x) = \max_{1\leq i\leq q}\min_{j\in S_{i}}f_{j}(x)$ for some family
$S_{i}\leq \{1,\ldots,p\}$.

Each $f_j$ is rational linear, so this function can be written: $f_j(x)=\sum_1^{n} \frac{\alpha_{j,i}}{n_j}x_i$.
To compute this linear sum, the following reactions are needed:
For every $x_i$, we add the reactions
$x_i =>\alpha_{j,i} \cdot w_j$
which compute $w=\sum_1^{n} \frac{\alpha_{j,i}}x_i$.

Then we add the reaction
$n_j \cdot w_j => y_j$
which compute $y_j = \frac{1}{n_j} w_j$.

The output of the CRN that computes a linear function is a funnel CRN\@.
Both $\max$ and $\min$ can be written with a
funnel CRN (see respectively Example~\ref{ex:max} and Example~\ref{ex:max})
and $\min$ can be composed by $\max$ as the output of $\min$ is not a reactant in the CRN that computes $\min$.
From Corollary~\ref{cor:compo}, the conclusion is immediate.
\end{proof}

\section{Evaluation on Biomodels}\label{eval}

In this section, we evaluate our sufficient condition for rate-independence
on the reaction graphs of the curated part of the repository of models BioModels~\cite{CLN13issb}.
These models are numbered from \verb=BIOMD0000000001= to \verb=BIOMD0000000705=.
After excluding the empty models (i.e.~models with no reactions or species), 590 models have been tested in total.
As already noted in~\cite{FGS15tcs} however,
many models in the curated of BioModels come from ODE models that have not been transcribed in SBML with well-formed reactions.
Basically, some species appearing in the kinetics are missing as reactants or modifiers in the reactions,
or some kinetics are negative.
In this section, we test our graphical conditions for rate independence on the reaction graphs given for those models,
without rewriting the structure of the reactions when they were not well-formed.
Therefore, the actual rate-independence of the models that satisfy our sufficient criteria is conditioned to the well-formedness of the CRN.

The evaluation has been performed using Biocham\footnote{All our experiments are available on \url{https://lifeware.inria.fr/wiki/Main/Software\#CMSB20b}}
with a timeout of 240 seconds.
The computer used for the evaluation has a quad-processor Intel Xeon 3.07GHz with 8Gb of RAM\@.

\subsection{Computation of rate-independent output species}

Following Def.~\ref{def:output}, we tested the species that constitute the singleton support of a P-surinvariant.
Among the 590 models tested,  340, i.e.~57.6\% of them, were found to have no output species.
94 models, i.e.~15.9\% of the models, were found to have at least one rate-independent output.
27 models, i.e.~4.5\%,  have both one rate-independent outut and one undecided output, i.e. an output not satisfying our sufficient condition.
86 models, i.e.~14.5\%, have at least one undecided output.

It is worth noting however that the species that are never modified by a reaction, i.e.~that are only catalysts, remain always constant and thus constitute trivial rate-independent outputs.
Amongst the 94 models with at least one rate-independent output found during evaluation, 29 have at least one non-trivial rate-independent output.
Table~\ref{table:poire} gives some details on the size and computation time for those 29 models.
\begin{table}[h!]
\begin{center}
\begin{tabular}{|r|r|r|r|r|r|c|r|}
 \hline
Biomodel\# & \#species & \#reactions & \#outputs & \#RI & \#NTRI & NTRI-species & Time (s) \\ 
\hline
037 & 12 & 12 & 2 & 2 & 2 & Yi, Pi & 0.950\\
\hline
104 & 6 & 2 & 3 & 3 & 1 & species\_4 & 0.074\\
\hline
105 & 39 & 94 & 11 & \elisabeth{3} & 1 & AggP\_Proteasome & 63.366\\
\hline
143 & 20 & 20 & 4 & \elisabeth{1} & 1 & MLTH\_c & 3.333\\
\hline
178 & 6 & 4 & 1 & 1 & 1 &  lytic & 0.139\\
\hline
227 & 60 & 57 & 2 & \elisabeth{1} & 1 &  s194 & 17.299\\
\hline
259 & 17 & 29 & 1 & 1 & 1 &  s10 & 2.308\\
\hline
260 & 17 & 29 & 1 & 1 & 1 &  s10 & 2.310\\
\hline
261 & 17 & 29 & 1 & 1 & 1 &  s10 & 2.297\\
\hline
267 & 4 & 3 & 1 & 1 & 1 &  lytic & 0.086\\
\hline
283 & 4 & 3 & 1 & 1 & 1 &  Q & 0.053\\
\hline
293 & 136 & 316 & 14 & \elisabeth{4} & 3 & aggE3, aggParkin,&\\
&&&&&&AggP\_Proteasome & >240\\
\hline
313 & 16 & 16 & 4 & \elisabeth{2} & 1 & IL13\_DecoyR & 2.071\\
\hline
336 & 18 & 26 & 1 & 1 & 1 & IIa & 4.148\\
\hline
344 & 54 & 80 & 7 & \elisabeth{2} & 1 & AggP\_Proteasome & >240\\
\hline
357 & 9 & 12 & 1 & 1 & 1 & T & 0.561\\
\hline
358 & 12 & 9 & 4 & \elisabeth{2} & 1 & Xa\_ATIII & 0.892\\
\hline
363 & 4 & 4 & 1 & 1 & 1 & IIa & 0.067\\
\hline
366 & 12 & 9 & 4 & \elisabeth{2} & 1 & Xa\_ATIII & 0.901\\
\hline
415 & 10 & 5 & 7 & 7 & 7 & s10, s11, s12,&\\
&&&&&&s13, s14, s9, s15 & 0.894\\
\hline
437 & 61 & 40 & 22 & \elisabeth{8} & 1 & T & 16.109\\
\hline
464 & 14 & 10 & 6 & \elisabeth{3} & 1 & s12 & 2.282\\
\hline
465 & 16 & 14 & 5 & 5 & 1 & s23 & 59.554\\
\hline
525 & 18 & 19 & 8 & \elisabeth{3} & 1 & p18inactive & 33.479\\
\hline
526 & 18 & 19 & 8 & \elisabeth {3} & 1 & p18inactive & 33.858\\
\hline
540 & 22 & 11 & 12 & \elisabeth{11} & 8 & s14, s15, s16, s17, &\\
&&&&&&s18, s19, s20, s21 & 56.134\\
\hline
541 & 37 & 32 & 13 & 9 & 7 & s14, s15, s16, s17, & \\
&&&&&&s18, s19, s21 & 31.573\\
\hline
559 & 90 & 136 & 18 & \elisabeth{2} & 2 & s493, s502 & 150.954\\
\hline
575 & 76 & 58 & 9 & \elisabeth{1} & 1 & DA\_GSH & 66.806\\
\hline

\end{tabular}
\end{center}
\caption{
  Model numbers in Biomodels containing non-trivial structurally persistent output species which are thus rate-independent by Thm.~\ref{thm:poire}.
For each model, we indicate the numbers of species, reactions, rate-independent species, non-trivial rate-independent species and total computation time in seconds.}\label{table:poire}
\end{table}

Now, evaluating by simulation the actual rate-independence property of those models,
  and thereby the empirical completeness of our purely graphical criterion in this benchmark, would raise a number of difficulties.
  First, as said above, many SBML models coming from ODE models have not been properly transcribed with well-formed reactions and would need to be rewritten \cite{FGS15tcs}.
Second, some models may contain additional events or assignment rules which are not reflected in the CRN reaction graph.
Third, the relevant time horizon to consider for simulation is not specified in the SBML file.
In the curated part of BioModels, this time horizon can range from 20s to 1 000 000s.

Nevertheless, we performed some manual testing on 9 models from Table~\ref{table:poire}, namely models 37, 104, 105, 143, 178 and 227,
which have at least one non-trivial rate-independent output, and models 50, 52 and 54, which have only undecided outputs.
For each model, numerical simulations were done with two different sets of initial concentrations and two different sets of parameters.
Even when it was not the case in the original models, all the parameters \sylvain{were set to positive values}.
All outputs in models 37 and 104 were found rate-independent which was confirmed by numerical simulation.
For model 105, 3 outputs among the 11 outputs of this model were found rate-independent by our algorithm which seemed again to be confirmed by numerical simulation.
Models 143 and 227 are not well-formed which explains why the species satisfying our graphical criterion were shown not tbe rate-independent by numerical simulation.
For models with only undecided outputs, i.e.~models 50, 52 and 54, numerical simulations show that none of their outputs is rate-independent. For these 3 models, 11 undecided outputs were tested in total.
In this manual testing, we did not find any output that was left undecided by the algorithm and was found rate-independent by numerical simulation.

\subsection{Test of global rate-independence}
In this section, we test the criterion given in Def.~\ref{def:brocoli} that ensures the rate-independence of all the species of a given CRN\@. 

On the 590 reaction models tested, 20 models have reached the timeout limit of 240 seconds and were therefore not evaluated.
Two models were found to be rate-independent on all species, namely models \verb=BIOMD0000000178= and \verb=BIOMD0000000267=.
These models constitute a chain of respectively 4 and 3 species.
At steady state, all species have a null concentration, except the last one.
The steady state value of the last species is equal to the sum of all the initial concentrations.
These models simulate the onset of paralysis of skeletal muscles induced by botulinum neurotoxin serotype A.
They are used in particular to get an upper time limit for inhibitors to have an effect~\cite{LAEC08jpp}.

These two models were also found to have rate-independent outputs during the evaluation of the previous criterion for outputs. 
The global criterion here shows that not only the output species of the chain are rate-independent, but also all the inner species of the chain.

\section{Conclusion}

We have given two graphical conditions for verifying the rate-independence property of a chemical reaction network.
First, the absence of synthesis, circuit and fork in the reaction graph,
ensures the existence of a single steady state that does not depend on the reaction rates,
thereby ensuring the existence of of a computed input/ouput function for all species of the CRN and thier independence of the rate of the reactions.
Second, the covering of a given output species by one P-invariant and no critical siphon,
provides a criterion to ensure the rate-independence property of the computed function on that output species.

These graphical conditions are sufficient but none of them is necessary.
Evaluation in BioModels suggests however that they are already quite powerful
since among the 590 models of the curated part of BioModels tested,
94 reaction graphs were found rate-independent for some output species, 29 for non-trivial output species, and 2 for all species
which was confirmed for well-formed models.

It is worth noting that our second condition uses the classical Petri net notions of P-invariant and siphons 
in a non-standard way for continuous systems.
A similar use has already been done for instance in~\cite{ALS07bct} for the study of persistence and monotone systems,
and interestingly in~\cite{JACB18jmb}, where the authors remark the discrepancy there is on the Petri net property of trap
between the standard discrete interpretation, under which a non empty trap remains non empty,
and the continuous interpretation under which a non empty trap may become empty.
This shows the remarkable power of Petri net notions and tools for the study of continuous dynamical systems,
thus beyond standard discrete Petri nets and outside Petri net theory properly speaking.

As already remarked in previous work~\cite{NMFS16constraints,Soliman12amb},
modeling the computation of Petri net invariants, siphons and other structural properties as a constraint satisfaction problem
provides efficient implementations using general purpose constraint solvers,
often showing better efficiency than with dedicated algorithms.
This was illustrated here by the use of a constraint logic program to implement our condition on P-invariants and critical siphons
by constraining the search to those sets of places that satisfy the condition,
without having to actually compute the sets of all P-invariants and critical siphons.

Finally, it is also worth noting that beyond verifying the rate-independence property of a CRN
and identifying the output species for which the computed function is rate-independent,
our graphical conditions may also be considered as structural constraints to satisfy for the design of rate-independent CRNs
in synthetic biology~\cite{CAFRM18msb}. They should thus play an important role in CRN design systems in the future.

\subsection*{Acknowledgements}
This work was jointly supported by ANR-MOST \emph{BIOPSY Biochemical Programming System} grant ANR-16-CE18-0029
and ANR-DFG \emph{SYMBIONT Symbolic Methods for Biological Networks} grant ANR-17-CE40-0036.

\bibliographystyle{splncs03}
\bibliography{contraintes}

\end{document}